\documentclass{article}
\usepackage{graphicx}
\usepackage{amsmath}


\newtheorem{theorem}{Theorem}

\newtheorem{definition}[theorem]{Definition}

\newtheorem{proposition}[theorem]{Proposition}

\newenvironment{proof}[1][Proof]{\textbf{#1.} }{\ \rule{0.5em}{0.5em}}

\date{}
\begin{document}
\date{}
\title{Wave propagation in periodic networks of thin fibers}
\author{S.
Molchanov\footnote{Dept of Mathematics, University of North
Carolina, Charlotte, NC 28223, smolchan@uncc.edu}, B.
Vainberg\footnote{Dept of Mathematics, University of North
Carolina, Charlotte, NC 28223, brvainbe@uncc.edu} \thanks{
The corresponding author}}\maketitle

\begin{abstract}
We will discuss a one-dimensional approximation for the problem of wave
propagation in networks of thin fibers. The main objective here is to describe
the boundary (gluing) conditions at branching points of the limiting
one-dimensional graph. The results will be applied to Mach-Zehnder
interferometers on chips and to periodic chains of the interferometers. The
latter allows us to find parameters which guarantee the transparency and
slowing down of wave packets.
\end{abstract}

{\it Key words}: asymptotics, wave propagation, scattering problem, slowing down, wave guide.

{\it 2000 MCS:}
35J05; 35P25; 58J37; 78A40.

\section{Introduction.}

The paper concerns the asympototic theory of wave propagation in
networks $\Omega _{\varepsilon }$ of thin fibers when the thickness of
fibers $\varepsilon $ goes to zero. An approximation of a wave or heat
proceses in such a network by a one-dimensional problem on the limiting
metric (quantum) graph has been discussed in physical literature for at least
three decades. In recent years it was the subject of several
mathematical conferences. The central
point of the asymptotic theory is the structure of the physical field near
junctions (branching points of the network). In the majority of publications
on quantum graphs the gluing conditions on the vertices of the graph (they
correspond to the junctions) have the simplest Kirchhoff's form. This form
can be justified in some cases (say, for the heat transport in a network
with insulated walls, \cite{FW}, \cite{FW1}). We discussed some possible
applications of the quantum graph approximations to the study of periodic
optical systems: structure of the spectrum, scattering, slowing down of the
light, \cite{smv}-\cite{smv2}. We assumed there that the Kirchoff's GC at vertices were
imposed.

However, our recent study leads us to the conclusion that the Kirchoff's GC are an
exception in optical applications where the
spectral parameter $\lambda $ is greater than the threshold $\lambda _{0}.$
The latter is equivalent to the condition that the propagation of waves in
the waveguides (cylindrical parts of the network) is possible. If $\Omega
_{\varepsilon }$ is unbounded, this also means that $\lambda $ belongs to
the absolutely continuous spectrum of the problem. While many particular
cases of that problem with $\lambda =\lambda _{0}+O(\varepsilon ^{2})$ or $
\lambda <\lambda _{0}$ were considered (see \cite{IT}-\cite{RS}), the
publications \cite{MV}-\cite{MVspain} were the first ones dealing with the case $
\lambda \geq \lambda _{0},$ and the first ones where the significance of the
scattering solutions for asymptotic analysis was
established. Papers \cite{MV}-\cite{MVspain} contain asymptotic
analysis of the spectrum, resolvent and solutions of the problem in a
network $\Omega _{\varepsilon }$ when $\varepsilon \rightarrow 0$. It was shown there that
the GC in those cases have general symplectic structure and can be expressed in
terms of the scattering matrices defined by individual junctions.

The main goal of the present paper is to describe the wave propagation through networks
of thin fibers of necklace type resulting in general GC at the vertices of the limiting graph. We say that a network is of necklace type if it is periodic in one direction and is bounded in the orthogonal plane.
The transition from the networks
of thin fibers to the one-dimensional problem on the graph will be recalled in the next section (see details in \cite{MV}-\cite{MVspain}). The necklace type graphs will be considered in section 3. We will calculate the propagator through one period, find the dispersion relation, describe the band-gap structure of the spectrum and find reflection and transmission coefficients for the truncated graph.

The main feature which distinguishes the graph theory from the Bloch theory of 1D periodic Schrodinger (Hill) operators is that the
propagator through one period is not an analytic function of the frequency anymore, but a meromorphic one. The corresponding poles (resonances) play an important role in applications. In the last section, we consider a specific necklace device and show how the earlier results allow one to find parameters which provide slowing down of the wave packets (slowing down of the light) accompanied by the transparency (almost zero reflection).

\section{Transition from networks to quantum graphs.}

Consider the stationary wave (Helmholtz) equation
\begin{equation}
H_{\varepsilon }u=-\Delta u=\omega^2 u,\text{ \ \ \ }x\in
\Omega _{\varepsilon },\text{ \ \ \ }Bu=0\text{ \ \ on }\partial \Omega
_{\varepsilon }\text{,}  \label{h0}
\end{equation}
in a domain $\Omega _{\varepsilon }\subset R^{d},$ $d\geq 2,$ with
infinitely smooth boundary (for simplicity), which has the following
structure: $\Omega _{\varepsilon }$ is a union of a finite number of
cylinders $C_{j,\varepsilon }$ (which will be called channels) of lengths $%
l_{j},$ $1\leq j\leq N,$ with diameters of cross-sections of order $O\left(
\varepsilon \right) $ and domains
(which will be called junctions) connecting the channels
into a network. It is assumed that the junctions have diameters of the same
order $O(\varepsilon )$. The boundary condition has the form: $B=1$ (the
Dirichlet BC) or $B=\frac{\partial }{\partial n}$ (the Neumann BC) or $%
B= \frac{\partial }{\partial n}+\alpha (x),$ where $n$ is the
exterior normal and\ the function $\alpha \geq 0$ is real valued and does
not depend on the longitudinal (parallel to the axis) coordinate on the
boundary of the channels. One also can impose one type of BC on the lateral
boundary of $\Omega _{\varepsilon }$ and another BC on the free ends (which are
not adjacent to a junction) of the channels.

\begin{figure}[tbph]
\begin{center}
\includegraphics[width=0.7\columnwidth]{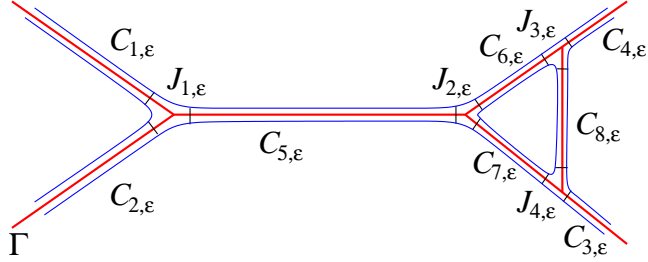}
\end{center}
\caption{An example of a domain $\Omega _{\protect\varepsilon }$ with four
junctions, four unbounded channels and four bounded channels.}
\label{fig-1}
\end{figure}

The domain $\Omega _{\varepsilon }$ shrinks to a one-dimensional metric graph $\Gamma$ as
$\varepsilon \rightarrow 0$. The axes of the
channels form edges $\Gamma _{j},$ $1\leq j\leq N,$ of $\Gamma $, and the distances between points of
$\Gamma _{j}$ are defined by the distances between the corresponding points of
the channels. The junctions
shrink to vertices of the graph $\Gamma $.
We denote the set of vertices $v_{j}$ by $V$.

\begin{figure}[tbph]
\begin{center}
\includegraphics[width=0.8\columnwidth]{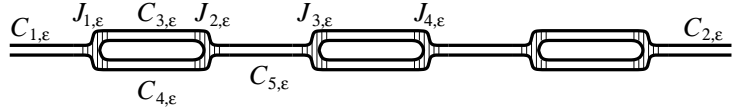}
\end{center}
\caption{Another example of a domain $\Omega _{\protect\varepsilon }$, network of a necklace type which will be studied in more detail later.}
\label{fig-2}
\end{figure}

For the sake of simplicity, we impose the following two geometrical conditions. First, we
assume that all the channels $C_{j,\varepsilon }$ have the same
cross-section $\pi _{\varepsilon } $ (the general case is studied in \cite{MV1}
). The second condition concerns the junctions.
We assume that they are self-similar. The latter means that
there exist an $\varepsilon$- independent domain $J_{v}$ and a point $\widehat{x}=\widehat{x}_v$
such that
\begin{equation}
J_{v,\varepsilon }=\{(\widehat{x}+\varepsilon x):x\in J_{v}\}.
\label{hom}
\end{equation}

From the self-similarity assumption it follows that $\pi _{\varepsilon
}\ $is an $\varepsilon $-homothety of a bounded domain $\pi \subset
R^{d-1}$.

Let $\lambda _{0}<\lambda _{1}\leq \lambda _{2}...$ be the eigenvalues of the
negative Laplacian $-\Delta _{d-1}$ in $\pi $ with orthonormal eigenfunctions $\{\varphi _{n}(y)\}$,
\[
-\Delta _{d-1}\varphi _{n}(y)=\lambda _{n}\varphi _{n}(y),~~Bu=0~~~\text {on}~~\partial \pi ,
\]
where $B$ is the boundary operator on
the channels defined in (\ref{h0}).
 Then $\varepsilon^{-2}\lambda _{n}$
are eigenvalues of $-\Delta _{d-1}$ in $\pi _{\varepsilon
}$ and $\{\varepsilon ^{-d/2}\varphi _{n}(y/\varepsilon )\}$ are the
corresponding orthonormal eigenfunctions,
\[
-\Delta _{d-1}\varphi _{n}(y/\varepsilon)=\varepsilon^{-2}\lambda _{n}\varphi _{n}(y/\varepsilon),~~Bu=0~~~\text {on}~~\partial \pi_{\varepsilon.}
\]
We will call the point $\varepsilon^{-2}\lambda _{0}$ the
threshold, since it is the bottom of the absolutely continuous spectrum
of operator (\ref{h0}) if $\Omega_{\varepsilon}$ has an infinite channel.

We introduce Euclidean coordinates $(z,y)$ in channels $C_{j,\varepsilon }$
chosen in such a way that the $z$-axis is parallel to the axis of the
\begin{figure}[tbph]
\begin{center}
\includegraphics[width=0.4\columnwidth]{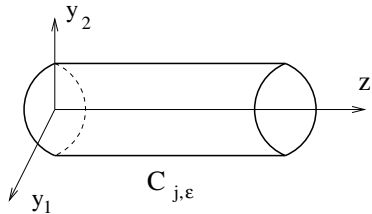}
\end{center}
\caption{Local coordinates in a channel.}
\label{fig-4}
\end{figure}
channel, hyperplane $R_{y}^{d-1}$
is orthogonal to the axis, and $C_{j,\varepsilon }$ has the following form
in the new coordinates:
\begin{equation*}
C_{j,\varepsilon }=\{(z,\varepsilon y):0<z<l_{j},\text{ }y\in \pi \}.
\end{equation*}
If a channel $C_{j,\varepsilon }$ is bounded ($l_{j}<\infty $), the
direction of the $z$ axis can be chosen arbitrarily (at least for now). If a channel is unbounded, then $z=0$ corresponds to its
cross-section which is adjacent to the junction.

We will impose the condition $(\varepsilon\omega)^2 \in (\lambda_0,\lambda_1)$ (see \cite{MV1} for the
general case). Note that
waves governed by (\ref{h0}) do not propagate through the channels if $(\varepsilon\omega)^2 <\lambda_0$.
There exists only one propagating mode
\begin{equation} \label{disp}
e^{\pm i\sigma z}\varphi
_{0}(y/\varepsilon ),~~\sigma=\sqrt{\omega ^2 -\varepsilon^{-2}\lambda _{0}},
\end{equation}
 if
$(\varepsilon\omega)^2 \in (\lambda_0,\lambda_1)$, and there are many similar modes
 \[
 e^{\pm i\sigma _j z}\varphi _{j}(y/\varepsilon ),~~\sigma_j=\sqrt{\omega ^2 -\varepsilon^{-2}\lambda _{j}},
 \]
 if $(\varepsilon\omega)^2 > \lambda_j$.

While many particular
cases of problem (\ref{h0}) with $(\varepsilon\omega)^2 =\lambda _{0}+O(\varepsilon ^{2})$ or $
(\varepsilon\omega)^2 <\lambda _{0}$ were considered (see references in \cite{MVspain}), the
publications \cite{MV}-\cite{MVspain} were the first ones dealing with the case $
(\varepsilon\omega)^2 \geq \lambda _{0}, \varepsilon \rightarrow 0,$ and the first ones where the significance of the
scattering solutions for asymptotic analysis of $H_{\varepsilon }$ was
established. In particular, it was shown there that in both cases $(\varepsilon\omega)^2
>\lambda _{0}$ and $(\varepsilon\omega)^2\approx \lambda _{0},$ the scattering solutions for equation (\ref{h0})
and the resolvent of the operator  $H_{\varepsilon }$ can be approximated by the corresponding solutions of the one-dimensional problem on the limiting graph
$\Gamma $ with the GC expressed in terms of the
scattering matrices of the individual extended junctions.

Let us recall the definition of scattering solutions for the problem (\ref
{h0}) in $\Omega _{\varepsilon }$ when $(\varepsilon\omega)^2 \in (\lambda _{0},\lambda
_{1}).$ The scattering solution $\Psi =\Psi
_{p,\varepsilon }$ describes the propagation of an incident wave with unit amplitude and frequency $\omega$ coming through the channel $C_{p,\varepsilon}$.
\begin{definition}
\label{d2}Let $\lambda _{0}<(\varepsilon\omega)^2 <\lambda _{1}.$ A function $\Psi =\Psi
_{p,\varepsilon },$ $1\leq p\leq m,$ is called a solution of the
scattering problem in $\Omega _{\varepsilon }$ if
\begin{equation}
(-\Delta -\omega ^2 )\Psi =0,\text{ \ }x\in \Omega
_{\varepsilon };\text{ \ \ \ }B\Psi =0\text{ \ on }\partial \Omega
_{\varepsilon },  \label{b9}
\end{equation}
and $\Psi $ has the following asymptotic behavior in infinite channels $%
C_{j,\varepsilon },\ \ 1\leq j\leq m:$%
\begin{equation}
\Psi _{p,\varepsilon }=[\delta _{p,j}e^{-i\sigma z}+t_{p,j}e^{i\sigma z}]\varphi _{0}(y/\varepsilon )+O(e^{-\frac{\alpha z}{%
\varepsilon }}),~~z\rightarrow \infty ,\text{ \ }\alpha >0.  \label{b10}
\end{equation}
Here $\sigma=\sqrt{\omega ^2 -\varepsilon^{-2}\lambda _{0}}$, $\delta _{p,j}$ is the Kronecker symbol, i.e. $\delta _{p,j}=1$ if $%
p=j, $ $\delta _{p,j}=0$ if $p$ $\neq j.$
\end{definition}
\textbf{Remark.} The term with
the coefficient $\delta _{p,j}$ in (\ref{b10}) corresponds to the incident
wave (coming through the channel $C_{p,\varepsilon }$), $t_{p,p}$ is the reflection coefficient, the terms with
coefficients $t_{p,j}, j \neq p,$ describe the transmitted waves. The
coefficients $t_{p,j}=t_{p,j}(\varepsilon ,\omega )$ depend on $\varepsilon
$ and $\omega $. The matrix

\begin{equation}
T=[t_{p,j}]  \label{scm}
\end{equation}
is called\textit{\ the scattering matrix}.

Standard arguments based on the Green formula provide the following
statement.

\begin{theorem}
\label{t3}When $\lambda _{0}<(\varepsilon\omega)^2 <\lambda _{1},$\ the scattering matrix $%
T$ is unitary and symmetric ($t_{p,j}=t_{j,p}$).
\end{theorem}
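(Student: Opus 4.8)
The plan is to obtain both properties from the second Green identity applied to pairs of scattering solutions on a truncated version of $\Omega_{\varepsilon}$, exactly as in classical scattering theory. The two structural ingredients that make it work are that $\omega$ is real (so a scattering solution and its complex conjugate both solve (\ref{b9})) and that the lateral boundary operator $B$ is formally self-adjoint (the coefficient $\alpha$ being real-valued), so that boundary terms on $\partial\Omega_{\varepsilon}$ drop out.

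Concretely, fix $1\le p,q\le m$ and cut each infinite channel $C_{j,\varepsilon}$ at $z=R$, obtaining a bounded domain $\Omega_{\varepsilon,R}$. Applying Green's identity to $\Psi_{p,\varepsilon}$ and $\overline{\Psi_{q,\varepsilon}}$, the bulk term $\int_{\Omega_{\varepsilon,R}}(\Psi_p\Delta\overline{\Psi_q}-\overline{\Psi_q}\Delta\Psi_p)\,dx$ vanishes because $-\Delta\Psi_p=\omega^2\Psi_p$ and, $\omega$ being real, $-\Delta\overline{\Psi_q}=\omega^2\overline{\Psi_q}$. On the portion of $\partial\Omega_{\varepsilon,R}$ contained in $\partial\Omega_{\varepsilon}$ (lateral walls and any free ends) the surface integrand $\Psi_p\partial_n\overline{\Psi_q}-\overline{\Psi_q}\partial_n\Psi_p$ vanishes pointwise: trivially for the Dirichlet and Neumann cases, and for the Robin case $\partial_n u+\alpha u=0$ because then the integrand equals $-\alpha\Psi_p\overline{\Psi_q}+\alpha\overline{\Psi_q}\Psi_p=0$ since $\alpha$ is real. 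Hence, writing $F_j(R)=\int_{\pi_{\varepsilon}}(\Psi_p\partial_z\overline{\Psi_q}-\overline{\Psi_q}\partial_z\Psi_p)\big|_{z=R}\,dy$ for the flux through the cut cross-section of the $j$-th infinite channel, we get $\sum_{j=1}^{m}F_j(R)=0$ for every $R$.

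I would then note that each $F_j(R)$ is actually independent of $R$: differentiating in $z$ and using the equation restricted to the channel, $-\partial_z^2 u-\Delta_y u=\omega^2 u$, converts $\partial_z F_j$ into a surface integral over $\partial\pi_{\varepsilon}$ of $\Psi_p\partial_\nu\overline{\Psi_q}-\overline{\Psi_q}\partial_\nu\Psi_p$, which vanishes by the lateral boundary condition exactly as above. Therefore $F_j$ may be evaluated in the limit $z\to\infty$, where by (\ref{b10}) (and the decay of the $O(e^{-\alpha z/\varepsilon})$ remainder together with its $z$-derivative) only the single propagating mode $e^{\pm i\sigma z}\varphi_0(y/\varepsilon)$ contributes. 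A short computation, with $\sigma>0$ and $\varphi_0$ real-valued, makes the oscillating cross terms $e^{\pm 2i\sigma z}$ cancel and gives $F_j=2i\sigma\,c_{\varepsilon}\,(\delta_{pj}\delta_{qj}-t_{pj}\overline{t_{qj}})$, where $c_{\varepsilon}=\int_{\pi_{\varepsilon}}\varphi_0^{2}(y/\varepsilon)\,dy>0$ is the same for all $j$. Summing over $j$ and dividing by $2i\sigma c_{\varepsilon}\neq 0$ yields $\sum_{j=1}^{m}t_{pj}\overline{t_{qj}}=\delta_{pq}$, i.e. $TT^{*}=I$; as $T$ is a finite square matrix this is unitarity.

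For symmetry I would rerun the identical argument with $\overline{\Psi_{q,\varepsilon}}$ replaced by $\Psi_{q,\varepsilon}$ (no conjugation), which is legitimate since $\Psi_{q,\varepsilon}$ also solves (\ref{b9}); the $j$-th flux then evaluates to $2i\sigma c_{\varepsilon}(\delta_{pj}t_{qj}-\delta_{qj}t_{pj})$, and summing over $j$ gives $t_{qp}-t_{pq}=0$. The only point that needs genuine care is the passage $R\to\infty$ in the flux integrals — that the exponentially small tail in (\ref{b10}) and its derivative really do not contribute — which is precisely why the argument is organized around the $R$-independence of $F_j$, so that the limit is taken against the clean leading-order asymptotics; the rest is bookkeeping with the normalization constant $c_{\varepsilon}$.
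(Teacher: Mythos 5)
Your proof is correct, and it is exactly the argument the paper has in mind: the authors give no details, stating only that the theorem follows from ``standard arguments based on the Green formula,'' and your write-up is a complete execution of that standard Green-identity computation (conjugated pairing for unitarity, unconjugated pairing for symmetry, with the flux evaluated via the channel asymptotics (\ref{b10})). No gaps; the handling of the lateral boundary conditions, the $R$-independence of the fluxes, and the finite-matrix step from $TT^{*}=I$ to unitarity are all sound.
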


It happens that the scattering solutions $\Psi _{p,\varepsilon}$ can be approximated with
an exponential in $\varepsilon$ accuracy using the scattering solutions of a one-dimensional problem on the limiting graph $\Gamma$ which are defined as follows. Consider the following equation on $\Gamma$
\begin{equation}
-\frac{d^{2}}{dz^{2}}\psi=\sigma ^2\psi ,~~\sigma=\sqrt{\omega ^2 -\varepsilon^{-2}\lambda _{0}}.
\label{greq}
\end{equation}
Obviously,
\[
\psi=c_{1,j}e^{-i\sigma z}+c_{2,j}e^{i\sigma z}
\]
on the edges $\Gamma_j \subset \Gamma$.

 We split the set $V$ of vertices $v$
of the graph into two subsets $V=V_{1}\cup V_{2},$ where the vertices from
the set $V_{1}$ have degree $1$ and correspond to the free ends of the
channels, and the vertices from the set $V_{2}$ have degree at least two and
correspond to junctions.

\begin{definition}We will say that $%
\psi =\psi _{p,\varepsilon}$ is a solution of the scattering problem on the
graph
$\Gamma $ with the incident wave coming through the edge $\Gamma _{p}$
if $\psi _{p,\varepsilon}$ satisfies equation (\ref{greq}),
\begin{equation}
\psi _{p,\varepsilon}(\gamma)=\delta _{p,j}e^{-i\sigma z}
+t _{p,j}e^{i\sigma z},~\gamma \in \Gamma _j,
\label{2222}
\end{equation}
on infinite edges $\Gamma _j, 1\leq j \leq m$, and satisfies the following GC at vertices $v$ of $\Gamma$:
\begin{equation}
B\psi =0\text{ \ \ at }v\in V_{1},  \label{bce}
\end{equation}
\begin{equation}
i \lbrack I_{v}+T_{v}(\varepsilon \omega )]\frac{d}{dz}\psi ^{(v)}(z)-%
\sigma[I_{v}-T_{v}(\varepsilon \omega )]\psi ^{(v)}(z)=0,%
\text{ \ \ \ }z=0,\text{ \ \ \ }v\in V_{2}.  \label{gc}
\end{equation}
\end{definition}
We keep the same BC at $%
v\in V_{1}$ as at the free end of the corresponding channel of $\Omega
_{\varepsilon }$, see (\ref{h0}), and we will specify GC
(\ref{gc}) in the next two paragraphs. However, first we would like to stress
that scattering coefficients $t_{p,j}$ in (\ref{2222}) are not required to coincide with those defined in (\ref{b10}) for the problem in the domain $\Omega_{\varepsilon}$. With the appropriate choice of GC (\ref{gc}), these
 coefficients are the same, and therefore we use the same notation.

We choose the parametrization on $\Gamma $ in such a way that $z=0$
at $v$ for all edges adjacent to this particular vertex. Let $d=d(v)\geq 2$
be the order (the number of adjacent edges) of the vertex $v\in V_{2}.$
For any function $\psi $ on $\Gamma ,$ we form a column-vector $%
\psi ^{(v)}=\psi ^{(v)}(z)$ with $d(v)$ components which is
formed by the restrictions of $\psi $ on the edges of $\Gamma $
adjacent to $v.$ We will need this vector only for small values of $z\geq
0. $ The GC (\ref{gc}) are defined in terms of
auxiliary scattering problems for extended junctions $J _{v,\varepsilon}
^{\infty}$. Each extended junction
consists of $J_{v,\varepsilon }$ and all the channels adjacent to $J_{v,\varepsilon}$.
If some of these channels have finite length, we extend them to infinity (see Fig. \ref{extj}).
\begin{figure}[tbph]
\begin{center}
\includegraphics[width=0.8\columnwidth]{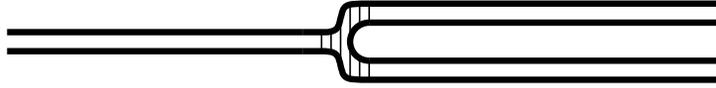}
\end{center}
\caption{An extended junction for a necklace waveguides.}
\label{extj}
\end{figure}
The matrix $T=T_{v}(\varepsilon\omega )$ is the scattering matrix
for the problem (\ref{h0}) in $J _{v,\varepsilon
}^{\infty }$ and $I_{v}$ is the unit matrix of the same size as the size
of $T.$ Note that the self-similarity of $J _{v,\varepsilon
}^{\infty }$ implies that $T=T_{v}(\varepsilon\omega )$ depends only on the product $\varepsilon\omega$. Hence, $T_v$ can be determined by solving the scattering problem in the corresponding extended junction with
$\varepsilon=1.$ Now (\ref{gc}) is defined. We need only to take components of the vector $\psi ^{(v)}$ in the same
order as the order of channels of $J _{v}^{\infty }.$

Denote by $F(\varepsilon)$ the set $\{\omega_j\}$ of the values of $\omega$ for which problem (\ref{greq}), (\ref{bce}), (\ref{gc}) has a nontrivial solution $\psi \in L^2(\Gamma)$ and
 $(\varepsilon\omega_j)^2\in[\lambda_0,\lambda_1]$. Note that both equation and the boundary conditions of the problem depend on $\omega$ and $\varepsilon$. Let $F^0$ be the set $\{\omega_j\}$ of values of $\omega$ such that $\omega_j^2$ is an eigenvalue of the operator $H_{\varepsilon}$ (see (\ref{h0})) in one of the domains
$J _{v,\varepsilon}^{\infty }$ and $(\varepsilon\omega_j)^2\in [\lambda_0,\lambda_1]$.
Due to self-similarity of the extended junctions
$J _{v,\varepsilon}^{\infty }$, the set $\varepsilon F^0$ does not depend on $\varepsilon$. Let $F^{\nu}$ be the
$e^{\frac{%
-\nu}{\varepsilon }}$-neighborhood of the set $F ^{0}\cup
F (\varepsilon ).$
\begin{theorem}(\cite{MVspain})\label{t1}
1) The set $F ^{0}\cup
F (\varepsilon )$ has finitely many points (the number of points depends on $\varepsilon$).

2) For any interval $[\lambda_0 ,\lambda ^{\prime }],~ \lambda ^{\prime }< \lambda_1,$ there exist $\rho ,\nu >0$
such that scattering solutions $\Psi _{p,\varepsilon}(x)$ of the problem in
$\Omega_ \varepsilon $ have the following asymptotic behavior on the channels
of $\Omega_ \varepsilon $ as $\varepsilon \rightarrow 0$%
\begin{equation*}
\Psi _{p,\varepsilon}(x)=\psi _{p}^{(\varepsilon)}(\gamma )\varphi _{0}(\frac{y}{\varepsilon }%
)+r_{p}^{(\varepsilon )}(x),~~\gamma \in \Gamma,
\end{equation*}
where $\psi _{p}^{(\varepsilon )}(\gamma )$ are the
scattering solutions  of the problem on the graph $\Gamma $ and
\begin{equation*}
|r_{p}^{(\varepsilon )}(x)|\leq Ce^{\frac{-\rho d(\gamma )}{\varepsilon }},%
\text{ \ \ \ \ \ \ }(\varepsilon\omega)^2 \in [\lambda_0 ,\lambda ^{\prime
}], ~\omega \notin F ^{\nu }.
\end{equation*}
Here $\gamma =\gamma (x)$ is the point on $\Gamma $ which is defined by the
cross-section of the channel through the point $x,$ and $d(\gamma )$ is the
distance between $\gamma $ and the closest vertex of the graph.
\end{theorem}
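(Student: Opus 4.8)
The plan is to prove Part 2 by a matched-asymptotic (parametrix) construction and Part 1 by elementary spectral bookkeeping. For Part 1: by self-similarity of the extended junctions the set $\varepsilon F^{0}$ does not depend on $\varepsilon$, and for $\varepsilon =1$ the operator $-\Delta$ on $J_{v}^{\infty}$ has only finitely many eigenvalues (possibly embedded) with spectral parameter in the compact interval $[\lambda_{0},\lambda_{1}]$, since these are the zeros, in that interval, of a determinant that is analytic in the natural resonance variable and cannot accumulate strictly below the second threshold $\lambda_{1}$; likewise a nonzero $L^{2}(\Gamma)$ solution of (\ref{greq})--(\ref{gc}) must vanish on every infinite edge, because there $\sigma>0$ is real and $e^{\pm i\sigma z}\notin L^{2}$, hence such a solution is supported on the finite part of $\Gamma$ and solves a finite linear system whose (meromorphic in $\omega$) determinant has finitely many zeros in the interval under consideration. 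Thus $F^{0}\cup F(\varepsilon)$ is finite.

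\textbf{The approximate solution.} For $\omega\notin F(\varepsilon)$ the graph scattering problem (\ref{greq}), (\ref{bce}), (\ref{gc}) has a unique solution $\psi_{p}^{(\varepsilon)}$: by Theorem \ref{t3} each $T_{v}$ is unitary, so (\ref{gc}) is a self-adjoint (Lagrangian) boundary condition, the radiation problem on $\Gamma$ is Fredholm of index zero, and solvability fails exactly on $F(\varepsilon)$. Fix a partition of unity $1=\chi_{\mathrm{ch}}(x)+\sum_{v\in V_{2}}\chi_{v}(x)$ on $\Omega_{\varepsilon}$ with each $\chi_{v}$ supported in an $O(\varepsilon)$-neighbourhood of the junction $J_{v,\varepsilon}$, and set
\[
\widetilde{\Psi}_{p,\varepsilon}(x)=\chi_{\mathrm{ch}}(x)\,\psi_{p}^{(\varepsilon)}(\gamma(x))\,\varphi_{0}(y/\varepsilon)+\sum_{v\in V_{2}}\chi_{v}(x)\,\Phi_{v,p}\!\left(\tfrac{x-\widehat{x}_{v}}{\varepsilon}\right),
\]
where $\Phi_{v,p}$ solves problem (\ref{h0}) in $J_{v}^{\infty}$ with spectral parameter $(\varepsilon\omega)^{2}$ and with incoming-wave amplitudes equal to the amplitudes of the incoming part of $\psi_{p}^{(\varepsilon)}$ on the edges of $\Gamma$ adjacent to $v$. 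By definition of the extended-junction scattering matrix, on the channels of $J_{v}^{\infty}$ the function $\Phi_{v,p}$ is asymptotically a combination of $e^{\pm i\sigma z}\varphi_{0}(y/\varepsilon)$ whose incoming part is that of $\psi_{p}^{(\varepsilon)}$ and whose outgoing amplitudes are $T_{v}(\varepsilon\omega)$ applied to the incoming ones, with an $O(e^{-\alpha z/\varepsilon})$ remainder, $\alpha\ge\sqrt{\lambda_{1}-\lambda'}$; and GC (\ref{gc}) is precisely the identity that the outgoing amplitudes of $\psi_{p}^{(\varepsilon)}$ at $v$ equal $T_{v}(\varepsilon\omega)$ times its incoming amplitudes. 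Hence the two pieces of $\widetilde{\Psi}_{p,\varepsilon}$ coincide, up to an exponentially small term, on the overlap of $\chi_{\mathrm{ch}}$ and $\chi_{v}$. This matching through the scattering matrix is the structural reason the theorem holds.

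\textbf{The residual.} Applying $-\Delta-\omega^{2}$ to $\widetilde{\Psi}_{p,\varepsilon}$ annihilates the channel term (since $\psi_{p}^{(\varepsilon)}(\gamma)\varphi_{0}(y/\varepsilon)$ solves (\ref{h0}) exactly there) and the junction terms on the cores of the junctions, leaving a residual $R_{p,\varepsilon}$ supported in the overlap annuli, where it equals commutators of the cutoffs with $-\Delta$ applied to the difference of the two representations of the field. Placing these annuli at a fixed longitudinal distance from the junctions (resp. letting $z\to\infty$ on unbounded channels, where one also uses the decay of the non-propagating modes and the outgoing radiation condition) one gets $\|R_{p,\varepsilon}\|\le Ce^{-c/\varepsilon}$ in the relevant weighted norm, with $c>0$ depending only on $\lambda'$ and the geometry; moreover the portion of $R_{p,\varepsilon}$ that influences a channel point $\gamma$ is of size $O(e^{-c\,d(\gamma)/\varepsilon})$.

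\textbf{Inversion and conclusion; the main obstacle.} Set $w_{p,\varepsilon}=\Psi_{p,\varepsilon}-\widetilde{\Psi}_{p,\varepsilon}$; then $(-\Delta-\omega^{2})w_{p,\varepsilon}=-R_{p,\varepsilon}$, $Bw_{p,\varepsilon}=0$ on $\partial\Omega_{\varepsilon}$, and $w_{p,\varepsilon}$ satisfies the outgoing radiation condition in the infinite channels. The crux is an a priori bound for the radiation resolvent of $H_{\varepsilon}-\omega^{2}$,
\[
\|w_{p,\varepsilon}\|\le e^{\nu/\varepsilon}\,\|R_{p,\varepsilon}\|,
\]
valid uniformly for $(\varepsilon\omega)^{2}\in[\lambda_{0},\lambda']$ and $\omega$ outside the $e^{-\nu/\varepsilon}$-neighbourhood $F^{\nu}$ of $F^{0}\cup F(\varepsilon)$. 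I would prove this by reducing, through essentially the parametrix above, the full resolvent on $\Omega_{\varepsilon}$ to the resolvent of the limiting operator on $\Gamma$ (bounded by the reciprocal of $\mathrm{dist}(\omega,F(\varepsilon))$) and to the resolvents of the individual extended junctions (bounded by the reciprocal of $\mathrm{dist}(\omega,F^{0})$, using that $\varepsilon F^{0}$ is $\varepsilon$-independent), and closing the loop by a convergent Neumann series; equivalently, by a contradiction/quasimode argument in which a violating sequence $\varepsilon_{k}\to0$ produces, via the channel spectral gap $\varepsilon^{-2}(\lambda_{1}-\lambda')$, a concentration of mass either in some extended junction---forcing $\omega$ within $e^{-\nu/\varepsilon}$ of $F^{0}$---or into a nontrivial $L^{2}(\Gamma)$ solution of (\ref{greq})--(\ref{gc})---forcing $\omega$ within $e^{-\nu/\varepsilon}$ of $F(\varepsilon)$. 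Obtaining this estimate with the correct $e^{-\nu/\varepsilon}$ separation, together with the thin-domain elliptic regularity that upgrades $L^{2}$ control to pointwise control, is the one genuinely hard step. Granting it and choosing $\nu<c$, we get $\|w_{p,\varepsilon}\|\le Ce^{(\nu-c)/\varepsilon}$, and tracking the $d(\gamma)$-dependence of $R_{p,\varepsilon}$ through the inversion upgrades this to $|r_{p}^{(\varepsilon)}(x)|=|w_{p,\varepsilon}(x)|\le Ce^{-\rho d(\gamma)/\varepsilon}$ with $\rho=c-\nu>0$, which is the assertion of Part 2.
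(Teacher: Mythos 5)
You should first note that the paper you are working from does not prove Theorem \ref{t1} at all: it is stated with the citation \cite{MVspain} (see also \cite{MV1}) and imported as a known result, so there is no in-paper argument to compare against. Your plan is, in broad outline, the same strategy those references use: build an approximate scattering solution by gluing the product ansatz $\psi_p^{(\varepsilon)}(\gamma)\varphi_0(y/\varepsilon)$ on the channels to rescaled scattering solutions of the extended junctions $J_{v}^{\infty}$, observe that GC (\ref{gc}) is exactly the compatibility condition making the propagating-mode amplitudes match so that the mismatch in the overlap regions is only the evanescent tail $O(e^{-\alpha z/\varepsilon})$, and then invert $-\Delta-\omega^{2}$ on the exponentially small residual. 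That part of your write-up is sound, including the observation that the pointwise $e^{-\rho d(\gamma)/\varepsilon}$ bound is $O(1)$ near the vertices (coming from the evanescent part of $\Phi_{v,p}$) and exponentially small in mid-channel (coming from the global smallness of $w_{p,\varepsilon}$).

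The genuine gap is the one you flag yourself: the a priori bound $\|w_{p,\varepsilon}\|\le e^{\nu/\varepsilon}\|R_{p,\varepsilon}\|$ for the outgoing resolvent, uniformly for $(\varepsilon\omega)^{2}\in[\lambda_{0},\lambda']$ and $\omega\notin F^{\nu}$. This is not a technical afterthought; it is the entire analytic content of the theorem, and neither of the two routes you sketch is complete as stated. The Neumann-series route presupposes that the junction scattering solutions and the graph resolvent blow up no faster than $\mathrm{dist}(\omega,F^{0})^{-1}$ and $\mathrm{dist}(\omega,F(\varepsilon))^{-1}$ respectively, with constants uniform in $\varepsilon$, and that the parametrix error $e^{-c/\varepsilon}$ beats the loss $e^{\nu/\varepsilon}$; explaining why the excluded neighbourhood can be taken exponentially small in $\varepsilon$ (rather than, say, polynomially small) is precisely what has to be proved. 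The quasimode/contradiction route as you state it only yields that a violating frequency is close to $F^{0}\cup F(\varepsilon)$ in some unquantified sense, not within $e^{-\nu/\varepsilon}$. A secondary, smaller gap is in Part 1: on the closed interval $[\lambda_{0},\lambda_{1}]$ you must rule out accumulation of embedded eigenvalues of $J_{v}^{\infty}$ at the second threshold $\lambda_{1}$, and you must show that the meromorphic determinant governing $F(\varepsilon)$ is not identically zero; both are asserted rather than argued. So the proposal is a correct road map, consistent with the method of \cite{MV1}, \cite{MVspain}, but the central estimate that makes it a proof is missing.
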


Note that this theorem implies the coincidence of the scattering matrices of the problems on
$\Omega_ \varepsilon $ and on the graph $\Gamma$.

We will conclude this section by the following important proposition:
\begin{proposition}
Suppose that det$[I_{v}+T_{v}(\varepsilon \omega )]$ is not identically equal to zero. Then for all $\omega$ such that
 $~(\varepsilon\omega)^2 \in [\lambda_0, \lambda_1]$, except at most a finite number of points, the GC
(\ref{gc}) can be written in the form
\begin{equation}
\frac{d}{dt}\psi ^{(v)}(t)-
\sigma A_v\psi ^{(v)}(t)=0,~~t=0,~~\sigma^2=\sqrt{\omega ^2 -\varepsilon^{-2}\lambda _{0}},
~~v\in V_{2}, \label{gc1}
\end{equation}
where matrix
\[
A_v=-i[I_{v}+T_{v}(\varepsilon\omega )]^{-1}
[I_{v}-T_{v}(\varepsilon\omega )]
\]
 is real valued and symmetric ($A_v=A'_v$).
\end{proposition}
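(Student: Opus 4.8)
The plan is to obtain (\ref{gc1}) from (\ref{gc}) by a Cayley-transform computation; the only non-algebraic ingredient is the fact that $\det[I_{v}+T_{v}(\varepsilon\omega)]$ vanishes at no more than finitely many admissible $\omega$. So the proof splits into three parts: identifying the finite exceptional set, performing the left-multiplication that puts (\ref{gc}) into the form (\ref{gc1}), and verifying that $A_{v}$ is real and symmetric.

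First I would settle the exceptional set. By Theorem~\ref{t3} the matrix $T_{v}=T_{v}(\varepsilon\omega)$ is unitary whenever $(\varepsilon\omega)^{2}\in(\lambda_{0},\lambda_{1})$; in particular it is bounded there, and by the analytic perturbation theory for the scattering matrix of the extended junction developed in \cite{MV}--\cite{MVspain} it depends analytically on $\omega$ on the open interval. Hence $\omega\mapsto\det[I_{v}+T_{v}(\varepsilon\omega)]$ is analytic on $\{(\varepsilon\omega)^{2}\in(\lambda_{0},\lambda_{1})\}$ and, by hypothesis, not identically zero; its zero set is therefore discrete, and on any compact subinterval it is finite. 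Adjoining the two threshold values $(\varepsilon\omega)^{2}=\lambda_{0},\lambda_{1}$ (where $T_{v}$ need not be analytic) one gets a finite set $E_{v}$ off which $[I_{v}+T_{v}]^{-1}$ exists. Then, for $\omega$ with $(\varepsilon\omega)^{2}\in[\lambda_{0},\lambda_{1}]\setminus E_{v}$, I would left-multiply (\ref{gc}) by $-i[I_{v}+T_{v}]^{-1}$: since $(-i)(i)=1$ the first term becomes $\frac{d}{dz}\psi^{(v)}$, while the second becomes $i\sigma[I_{v}+T_{v}]^{-1}[I_{v}-T_{v}]\psi^{(v)}=-\sigma A_{v}\psi^{(v)}$ with $A_{v}=-i[I_{v}+T_{v}]^{-1}[I_{v}-T_{v}]$; as $-i[I_{v}+T_{v}]^{-1}$ is invertible, the resulting relation is equivalent to (\ref{gc}), and it is exactly (\ref{gc1}).

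It then remains to check that $A_{v}$, the Cayley transform of the unitary symmetric matrix $T_{v}$, is real symmetric. For the symmetry $A_{v}'=A_{v}$: the matrices $I_{v}+T_{v}$ and $I_{v}-T_{v}$ are polynomials in $T_{v}$ and hence commute, so $[I_{v}+T_{v}]^{-1}[I_{v}-T_{v}]=[I_{v}-T_{v}][I_{v}+T_{v}]^{-1}$; transposing and using $T_{v}'=T_{v}$ (Theorem~\ref{t3}) gives $A_{v}'=-i[I_{v}-T_{v}][I_{v}+T_{v}]^{-1}=A_{v}$. For reality I would show $A_{v}$ is also Hermitian: taking adjoints, $A_{v}^{*}=i[I_{v}-T_{v}^{*}][I_{v}+T_{v}^{*}]^{-1}$, and substituting $T_{v}^{*}=T_{v}^{-1}$ (unitarity), then multiplying numerator and denominator by $T_{v}$ and invoking the same commutativity, a short computation yields $A_{v}^{*}=-i[I_{v}+T_{v}]^{-1}[I_{v}-T_{v}]=A_{v}$. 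A matrix that is simultaneously symmetric ($A_{v}'=A_{v}$) and Hermitian ($A_{v}^{*}=A_{v}$) satisfies $\overline{A_{v}}=(A_{v}^{*})'=A_{v}'=A_{v}$, so it is real-valued, hence real symmetric, as claimed.

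The only step that requires input beyond elementary linear algebra is the finiteness of $E_{v}$, which rests on the analytic (more precisely meromorphic, cf. the discussion in the Introduction) dependence of $T_{v}$ on the frequency established in the cited works; within that dependence, the handling of the threshold points $\lambda_{0},\lambda_{1}$ is the only mildly delicate bookkeeping, and everything else is the one-line Cayley-transform identity together with the unitarity and symmetry supplied by Theorem~\ref{t3}.
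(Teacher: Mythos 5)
Your proposal is correct, and for the substantive part of the argument it takes a genuinely different route from the paper. Both proofs dispose of the exceptional set the same way (analyticity of $T_{v}$ in the spectral parameter plus the hypothesis that $\det[I_{v}+T_{v}]$ is not identically zero; the paper compresses this into one clause, you spell it out), and the passage from (\ref{gc}) to (\ref{gc1}) by left-multiplying with $-i[I_{v}+T_{v}]^{-1}$ is common to both. The difference is in how the reality and symmetry of $A_{v}$ are established. The paper diagonalizes: it invokes the fact that a unitary symmetric matrix admits a \emph{real} orthogonal eigenbasis, writes $T_{v}=C_{v}^{-1}D_{v}C_{v}$ with $C_{v}$ real orthogonal and $D_{v}=\mathrm{diag}(z_{j})$, $|z_{j}|=1$, and reduces everything to the scalar observation that $i\frac{1-z_{j}}{1+z_{j}}$ is real (it equals $\tan(\theta_{j}/2)$ for $z_{j}=e^{i\theta_{j}}$); this has the side benefit of exhibiting the spectrum of $A_{v}$ explicitly. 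You instead never diagonalize: you show $A_{v}'=A_{v}$ from $T_{v}'=T_{v}$ and the commutativity of $I_{v}\pm T_{v}$, show $A_{v}^{*}=A_{v}$ from $T_{v}^{*}=T_{v}^{-1}$, and conclude reality from $\overline{A_{v}}=(A_{v}^{*})'=A_{v}'$. Your version is more self-contained, since the real-orthogonal-diagonalizability of a unitary symmetric matrix (which the paper asserts as an immediate consequence of Theorem~\ref{t3}) actually requires a small simultaneous-diagonalization argument for the commuting real symmetric matrices $\mathrm{Re}\,T_{v}$ and $\mathrm{Im}\,T_{v}$, whereas your identities are one-line checks. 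What you lose is the explicit spectral picture of $A_{v}$ that the paper's decomposition provides. Both arguments are complete and correct.
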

\begin{proof}Due to analyticity of $T_v$ in $\lambda$, we need only to justify properties of $A_v$. From
 Theorem \ref{t3} it follows that the eigenvectors of $T_v$ can be chosen to be real valued, i.e., there
 exists a real-valued orthogonal matrix $C_v=C_v(\lambda)$ such that $T_v=C_v^{-1}D_vC_v$ where $D_v$
 is a diagonal matrix with elements $z_j,~|z_j|=1$ on the diagonal. Numbers $z_j$ are (complex) eigenvalues of $T_v$. Obviously,
 \[
I_{v}\pm T_{v}=I_{v}\pm C_v^{-1}D_vC_v=C_v^{-1}[I_{v}\pm D_v]C_v,~~
A_v=iC_v^{-1}[I_{v}+ D_v]^{-1}[I_{v}- D_v]C_v.
 \]
 It remains to note that the numbers $i\frac{1-z_j}{1+z_j}$ are real.
\end{proof}
\section{Necklace graphs}
 In this section we will consider a periodic metric graph $
\Gamma $ of the necklace type (see \cite{smv1}). It has the following form: one cell of periodicity consists of two
arches $\mathcal{L}_{1},\mathcal{L}_{2}$ of lengths $l_{1}$ and $l_{2}\leq
l_{1}$ connected at end points and of a segment $\mathcal{L}_{3}$ of length $%
l_{3}$ starting at one of these points (see fig. 1). We assume that the
necklace is placed horizontally. Let $v_{2m-1},v_{2m}$ be the left and the
right end points of the segments with $v_{0}$ being the origin, and with two
arches connecting the points $v_{2m}$ and $v_{2m+1}$, ~~ $m=0,\pm 1,\pm
2,...~.$

We introduce two related local coordinates $z$ and $s$ on the edges of the graph. Both are
the lengths of the corresponding part of an arch or a segment, measured from
some end of the edge. When a
neighborhood of some vertex is considered (for example when GC are defined), the distance is measured from that vertex for all the edges adjacent to this vertex. The coordinate (distance) $z$
is used in this case. In other cases it will be convenient for us to measure the distance from the left end
of the edge to the right. We will specify this situation by using parameter $s$ instead of $z$.  Thus, $s=z$ or $s=l-z$ where $l$ is the length of the edge.

We equip the graph with the natural Lebesgue measure and
consider the Hamiltonian $H$ on $L^{2}(\Gamma )$ given by $H=-\frac{d^{2}}{dz^{2}}=-\frac{d^{2}}{ds^{2}}$ on the graph with the following GC
(see (\ref{gc1}))
\begin{equation}
\frac{d}{dt}\psi ^{(v)}(z)-
\sigma A_v(\varepsilon\omega)\psi ^{(v)}(z)=0,%
\text{ \ \ \ }z=0,\text{ \ \ \ }v\in V, \label{gc2}
\end{equation}
at the vertices of $\Gamma$. Here and below we use
\[
\sigma =\sqrt{\omega^2-\varepsilon^{-2}\lambda_0}.
\]

Our goal in this section is to study the propagation of waves on $\Gamma$ governed by the equation
\begin{equation} \label {eq}
H\psi=-\frac{d^{2}}{dz^{2}}\psi =\sigma ^{2}\psi,~~\gamma \in \Gamma,
\end{equation}
and GC (\ref{gc2}). Note that both the equation and GC depend on the frequency $\omega$ and $\varepsilon$.

We enumerate the components $\psi_j$ of the vector $\psi^{(v)}$ in the following order: $\psi_j$ corresponds to the edge of the length $l_j$. Thus the first two components of the vector correspond to the shoulders of the loop, and the third component corresponds to the straight edge.

\begin{figure}[tbph]
\begin{center}
\includegraphics[width=0.8\columnwidth]{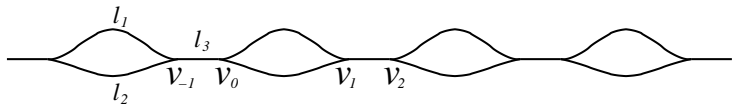}
\end{center}
\caption{Graph $\Gamma$ for the necklace waveguide.}
\label{neckgr}
\end{figure}

The goal of this section is to
define and evaluate the (Pr\"{u}ffer) monodromy operator $M_{\sigma}$ (transfer operator over the period) for the problem (\ref{eq}), (\ref{gc2}).
Let $\psi'=\frac{d\psi}{ds}$ and $\left(
\begin{array}{c}
\psi \\
\sigma ^{-1}\psi'
\end{array}
\right) (\alpha)$ be the Cauchy data (it always will have the factor $\sigma
^{-1} $ in the second component) of the solution $\psi$ of the equation
(\ref{eq}) evaluated at the point $\alpha$ of a straight segment of $\Gamma $.
When $\alpha=v_{n}$ is one of the end points of a segment, we understand
this vector as the limit of the corresponding vectors evaluated at $\alpha$ as $\alpha $
approaches $v_{n}$ moving along the segment (not along one of the arches).
We denote by $M_{\sigma }$ the monodromy operator:
\begin{equation*}
M_{\sigma }:\left(
\begin{array}{c}
\psi \\
\sigma ^{-1}\psi'
\end{array}
\right) (v_{0})\rightarrow \left(
\begin{array}{c}
\psi \\
\sigma ^{-1}\psi '
\end{array}
\right) (v_{2})
\end{equation*}
and we denote by $T_{\sigma }$ the (Pr\"{u}ffer) transfer
operator over the loop:
\begin{equation*}
T_{\sigma }:\left(
\begin{array}{c}
\psi \\
\sigma ^{-1}\psi'
\end{array}
\right) (v_{0})\rightarrow \left(
\begin{array}{c}
\psi \\
\sigma ^{-1}\psi'
\end{array}
\right) (v_{1}).
\end{equation*}
We will use the same notations $M_{\sigma }, T_{\sigma}$ for the matrices of the operators as for the corresponding operators.

Let us write matrix $A_v=(a_{i,j})$ (see (\ref{gc1}), (\ref{gc2})) in the form
\begin{equation} \label{AB}
A_v=\left(
\begin{array}{c}
B~\delta \\
\delta^*~c
\end{array}
\right),~~\text{where}~
B=\left(
\begin{array}{c}
a_{1,1} ~a_{1,2}\\
a_{2,1}~a_{2,2}
\end{array}
\right),~\delta=
\left(
\begin{array}{c}
\delta_1 \\
\delta_2
\end{array}
\right)=\left(
\begin{array}{c}
a_{1,3} \\
a_{2,3}
\end{array}
\right), ~c=a_{3,3}.
\end{equation}
We will need the following matrices
\begin{equation} \label{csp}
S=\left(\begin{array}{c}
\sin\sigma l_1~~~~0 \\
0~~~~\sin\sigma l_2
\end{array}\right),~~C=\left(\begin{array}{c}
\cos\sigma l_1~~~~0 \\
0~~~~\cos\sigma l_2
\end{array}\right),  ~~P=C+SB,
\end{equation}
\begin{equation} \label{mn}
 M=(I-P^2)^{-1}PS,~~N=(I-P^2)^{-1}S.
\end{equation}
\begin{theorem}
1) Matrix $T_\sigma$ has the form
\begin{equation} \label{trl}
T_{\sigma }=-\left(
\begin{array}{c}
\frac{m}{n}~~~\frac{1}{n} \\
\frac{m^2-n^2}{n}~~\frac{m}{n}
\end{array}\right),~~~n=\langle\delta, N\delta\rangle,~~m=c+\langle\delta, M\delta\rangle.
\end{equation}
2) Matrix $M_{\sigma}$ has the form
\begin{equation} \label{mon}
M_{\sigma }=\left(\begin{array}{c}
\cos\sigma l_3~~~\sin\sigma l_3 \\
-\sin\sigma l_3~~~~\cos\sigma l_3
\end{array}\right)T_{\sigma }.
\end{equation}
\end{theorem}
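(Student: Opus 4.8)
The plan is to evaluate $T_{\sigma }$ by solving equation (\ref{eq}) on the two arches of the loop, matching the Cauchy (Pr\"{u}ffer) data at $v_{0}$ and $v_{1}$ through the gluing conditions (\ref{gc2}), and then to obtain $M_{\sigma }$ from $T_{\sigma }$ by composing with the elementary transfer over the straight edge $\mathcal{L}_{3}$.

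For $T_{\sigma }$: on the arch $\mathcal{L}_{j}$, $j=1,2$, let $\alpha _{j}$ and $\beta _{j}=\sigma ^{-1}\psi '$ be the value and the normalized derivative of $\psi $ at the $v_{0}$-end (local coordinate measured from $v_{0}$), and $\gamma _{j},\eta _{j}$ the analogous quantities at the $v_{1}$-end; set $\alpha =(\alpha _{1},\alpha _{2})^{\top }$ and likewise for $\beta ,\gamma ,\eta $. Since $\{\cos \sigma z,\sin \sigma z\}$ spans the solution space of (\ref{eq}), propagation along the two arches reads $\gamma =C\alpha +S\beta $, $\eta =-S\alpha +C\beta $ with $C,S$ from (\ref{csp}). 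I would then write the GC (\ref{gc2}) at $v_{0}$ and at $v_{1}$ — the same matrix $A_{v}$ at both, by periodicity — in the block form (\ref{AB}), ordering the components as arch $\mathcal{L}_{1}$, arch $\mathcal{L}_{2}$, straight segment. By the definition of the monodromy operators, the value of $\psi $ on the straight segment at $v_{0}$ (resp.\ $v_{1}$) is the first entry of the Cauchy data $(u_{0},w_{0})$ (resp.\ $(u_{1},w_{1})$); note that (\ref{gc2}) does not force continuity of $\psi $ at the vertices, so in general this segment value differs from $\alpha _{1},\alpha _{2}$. Each GC splits into two ``arch rows'', the vector identities $\beta =B\alpha +u_{0}\delta $ at $v_{0}$ and $-\eta =B\gamma +u_{1}\delta $ at $v_{1}$, and one ``segment row'', the scalars $-w_{0}=\langle \delta ,\alpha \rangle +cu_{0}$ and $w_{1}=\langle \delta ,\gamma \rangle +cu_{1}$ (the signs coming from the direction of the derivative along the segment). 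Eliminating $\alpha ,\beta ,\eta $ from the arch rows and the propagation relations reduces everything to
\[
\alpha =P\gamma +u_{1}S\delta ,\qquad Q\gamma =u_{0}\delta +u_{1}P^{\top }\delta ,\qquad P=C+SB,\quad Q=S-BC-CB-BSB.
\]

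The step I expect to be the main obstacle is this reduction, and specifically the two algebraic identities $SQ=I-P^{2}$ and $SP^{\top }=PS$. Both rely on $A_{v}$ being real and symmetric (so $B=B^{\top }$) and on $C,S$ being diagonal (so $CS=SC$); since $B$ commutes with neither $C$ nor $S$ it is easy here to confuse $P$ with $P^{\top }$. Granting them: wherever $I-P^{2}$ is invertible (which the definition (\ref{mn}) of $M,N$ presupposes), $S$ and $Q$ are invertible too, $Q^{-1}=(I-P^{2})^{-1}S=N$, $Q^{-1}P^{\top }=PQ^{-1}=(I-P^{2})^{-1}PS=M$, and $PQ^{-1}P^{\top }=PM=N-S$ (the last three using that $P$ commutes with $(I-P^{2})^{-1}$). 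Substituting $\gamma =u_{0}N\delta +u_{1}M\delta $ and then $\alpha =P\gamma +u_{1}S\delta =u_{0}M\delta +u_{1}N\delta $ into the two segment rows, the $\langle \delta ,S\delta \rangle $ contributions cancel and one is left with the $2\times 2$ system
\[
-w_{0}=m\,u_{0}+n\,u_{1},\qquad w_{1}=n\,u_{0}+m\,u_{1},\qquad n=\langle \delta ,N\delta \rangle ,\quad m=c+\langle \delta ,M\delta \rangle .
\]
Solving for $(u_{1},w_{1})$ in terms of $(u_{0},w_{0})$ gives $T_{\sigma }$ in the form (\ref{trl}); as a check $\det T_{\sigma }=\bigl(m^{2}-(m^{2}-n^{2})\bigr)/n^{2}=1$, as it must be for the transfer matrix of a formally self-adjoint problem, and the poles (where $n=0$ or $\det (I-P^{2})=0$) are the resonances mentioned in the Introduction. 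Finally, along the straight edge $\mathcal{L}_{3}$ of length $l_{3}$ the Cauchy data $\bigl(\psi ,\sigma ^{-1}\psi '\bigr)$ propagates by $\left(\begin{smallmatrix}\cos \sigma l_{3} & \sin \sigma l_{3}\\ -\sin \sigma l_{3} & \cos \sigma l_{3}\end{smallmatrix}\right)$, so $M_{\sigma }$, being $T_{\sigma }$ followed by this propagation from $v_{1}$ to $v_{2}$, has the form (\ref{mon}).
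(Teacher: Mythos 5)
Your proof is correct and follows essentially the same route as the paper: solve (\ref{eq}) on the two arches, impose the GC (\ref{gc2}) at $v_0$ and $v_1$, eliminate the arch variables to get a $2\times 2$ system relating the segment Cauchy data at the two vertices, and compose with the rotation over $\mathcal{L}_3$. The only difference is bookkeeping: the paper parametrizes each arch by its two endpoint values, which immediately yields the symmetric system $-P\phi(v_0)+\phi(v_1)=S\delta\psi_3(v_0)$, $\phi(v_0)-P\phi(v_1)=S\delta\psi_3(v_1)$ and hence $M,N$ directly, whereas your Cauchy-data parametrization needs the auxiliary identities $SQ=I-P^{2}$ and $SP^{\top}=PS$ -- which you state correctly and which do hold since $B=B^{\top}$ and $C,S$ are diagonal.
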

\begin{proof}Let $\psi_j= \psi_j(s),~ j=1,2,$ be the restrictions of the function $\psi$ on the upper and lower arches of the graph between the points $v_0$ and $v_1$. Obviously,
\[
\psi_1(s)=\frac{\psi_1(0)\sin\sigma(l_1-s)+\psi_1(l_1)\sin\sigma s}{\sin\sigma l_1}.
\]
Thus,
\[
\psi_1'(0)=\frac{-\sigma\psi_1(0)\cos \sigma l_1+\sigma \psi_1(l_1)}{\sin\sigma l_1}, ~~
\psi_1'(l_1)= \frac{-\sigma \psi_1(0)+\sigma\psi_1(l_1)\cos \sigma l_1}{\sin\sigma l_1}.
\]
A similar formula is valid for $\psi_2$. Hence, taking into account the relation between parameters $s$ and $t$, we obtain the following connection between the values of the vector $\phi=\left(\begin{array}{c}
\psi_1 \\\psi_2 \end{array}\right)$ and its derivative $\frac{d\phi}{dz}$ at points $v_0$ and $v_1$
of the graph:
\begin{equation*}
\frac{d\phi}{dz}(v_0)=-\sigma CS^{-1}\phi(v_0)+\sigma S^{-1}\phi(v_1),~~
\frac{d\phi}{dz}(v_1)=-\sigma CS^{-1}\phi(v_1)+\sigma S^{-1}\phi(v_0).
\end{equation*}

Let us denote the restrictions of $\psi$ on the straight edges of the graph by $\psi_3.$ Consider vectors
$\psi^{(v_0)}=\left(\begin{array}{c}
\phi(v_0) \\ \psi_3(v_0) \end{array}\right),~\psi^{(v_1)}=\left(\begin{array}{c}
\phi(v_1) \\ \psi_3(v_1) \end{array}\right).$ GC (\ref{gc2}) implies
\begin{equation*}
\left(\begin{array}{c}
-CS^{-1}\phi(v_0)+ S^{-1}\phi(v_1)\\ \omega ^{-1}\frac{d}{dz}\psi_3(v_0) \end{array}\right)=
\left(
\begin{array}{c}
B~~~\delta\\
\delta ^*~~c
\end{array}
\right)\left(\begin{array}{c}
\phi(v_0) \\ \psi_3(v_0) \end{array}\right),
\end{equation*}
\begin{equation*}
\left(\begin{array}{c}
-CS^{-1}\phi(v_1)+ S^{-1}\phi(v_0)\\ \sigma ^{-1}\frac{d}{dz}\psi_3(v_1) \end{array}\right)=
\left(
\begin{array}{c}
B~~~\delta \\
\delta^*~~c
\end{array}
\right)\left(\begin{array}{c}
\phi(v_1) \\ \psi_3(v_1) \end{array}\right).
\end{equation*}
These equations can be rewritten in the following form
\[
-(CS^{-1}+B)\phi(v_0)+ S^{-1}\phi(v_1)=\delta\psi_3(v_0),
\]
\[
-\sigma ^{-1}\frac{d}{ds}\psi_3(v_0)=\delta^*\phi(v_0)+c\psi_3(v_0),
\]
\[
-(CS^{-1}+B)\phi(v_1)+ S^{-1}\phi(v_0)=\delta\psi_3(v_1),
\]
\[
\sigma ^{-1}\frac{d}{ds}\psi_3(v_1)=\delta^*\phi(v_1)+c\psi_3(v_1).
\]
We multiply the first and third equations by $S$, replace $C+SB$ by $P$ (see (\ref{csp})) and then solve these equations for $\phi(v_0),~\phi(v_1)$. This implies
\[
\phi(v_0)=-(I-P^2)^{-1}PS\delta \psi_3(v_0)+(I-P^2)^{-1}S\delta \psi_3(v_1),
\]
\[
\phi(v_1)=-(I-P^2)^{-1}S\delta \psi_3(v_0)+(I-P^2)^{-1}PS\delta \psi_3(v_1).
\]
We substitute these relations into the second and forth equations of the system above and then solve those equations for $\psi_3(v_1), ~\sigma ^{-1}\frac{d}{ds}\psi_3(v_1)$. This provides the transfer operator over the loop with the transfer matrix defined in (\ref{trl}). This completes the proof of the first part of theorem. The second statement of the theorem is obvious, since the left factor in the right-hand side of (\ref{mon}) is the transfer matrix over the segment $[v_0,v_1]$ of the graph.
\end{proof}
\section{Some applications of necklace waveguides}
We will discuss here two practical features of necklace waveguides. The first concerns slowing down of the light ( slowing down of propagation of wave packets) in these waveguides. There is an extended literature on the principles of this phenomenon, possible applications and practical devices. Usually some type of a periodic structure is suggested for these devices with a band-gap structure of the spectrum. If a narrow band is created, then the dispersion relation for the corresponding frequency is flat and the group velocity $V_g$ is small.

The main feature which distinguishes a necklace waveguide from other one-dimensional (or quasi one-dimensional) periodic problem is the following. While the Hill discriminant for a standard periodic Schrodinger operator is an analytic function of frequency, it is meromorphic for the necklace waveguides. We will show that one can easily find parameters when the Hill discriminant has two close poles (as close as one pleases) with a zero in between. Thus the band will be as narrow as we please around a chosen value $\sigma=\sigma_0$ of the frequency. Therefore, the group velocity will be small if the support of the wave packets belongs to a small neighborhood of $\sigma_0$.
\begin{figure}[tbp]
\par
\begin{center}
\includegraphics[width=110mm, angle=0]{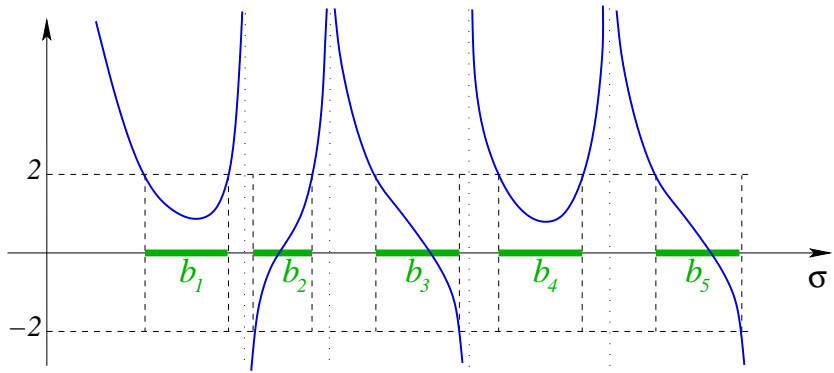}
\caption{The graph of the Hill discriminant $\mathrm{Trace}M_{\protect\sigma }=2\cos
k(\sigma ),~~\sigma=\sqrt{\omega^2-\varepsilon^{-2}\lambda_0}.$}
\label{neck22}
\end{center}
\end{figure}

The second feature concerns the truncated necklace graph (waveguide) $\Gamma_N$ which consists of $N$ cells of periodicity confined between points $v_0$ and $v_{2N}$ and the rays $(-\infty,v_0), (v_{2N},\infty.)$ When propagation of a narrow in frequency wave packet through a finite device is considered, it could happen that the waves, which are slowing down in the device, can not enter the device. In other words, one needs to know that the reflection coefficient $r$ for the truncated necklace waveguide is not too big (better if it is small) for the frequencies where the slowing down occurs. It will be shown that for a given frequency $\sigma=\sigma_0$ one can choose the parameters $l_j, j=1,2,3,$ such that $V_g=r=0$ at $\sigma=\sigma_0$, and therefore they are small for $\sigma$ close enough to $\sigma_0$.

\textit{Slowing down, preliminary discussion}. Theorem \ref{t1} allows us to reduce the study of propagation of single frequency waves and wave packets in periodic necklace waveguide $\Omega_{\varepsilon}$ (see Fig. \ref{fig-2}) to a study of the corresponding problem on the necklace graph $\Gamma$ (see Fig. \ref{neckgr}). One needs only to take in correspondence the frequencies $\omega$ and $\sigma$ of the waves in $\Omega_{\varepsilon}$ and on $\Gamma$ using the relation
\[
\sigma=\sqrt{\omega^2-\varepsilon^{-2}\lambda_0}.
\]
%
%

The spectrum of periodic problem (\ref{eq}), (\ref{gc2}) has a band-gap structure with the bands on the $\sigma$-axis defined by the inequality
\[
|F(\sigma)|<2,~~~F(\sigma)=\text{Tr}M_{\sigma},
\]
where function $F$ (called the Hill discriminant) is equal to the trace of the monodromy matrix (see Fig \ref{neck22}). The same function defines the dispersion relation (we will write it in the form $k=k(\sigma)$) of the problem on $\Gamma$:
\begin{equation} \label{dispr}
\cos k(\sigma)=\frac{1}{2}F(\sigma)=\frac{1}{2}\text{Tr}M_{\sigma}.
\end{equation}
Let $\omega=\omega_0$ belong to the frequency interval of a narrow wave packet of the problem in $\Omega_{\varepsilon}$ and
\[
\sigma_0=\sqrt{\omega_0^2-\varepsilon^{-2}\lambda_0}.
\]
We will find parameters $l_j$ in such a way that
\begin{equation} \label{f0}
|F(\sigma_0)|=0,
\end{equation}
 and there is a point $\sigma=\sigma_1$, for example, to the right of $\sigma_0$, such that $F(\sigma)$ has a pole at $\sigma=\sigma_1$ and  $|\sigma_0-\sigma_1| = \epsilon \ll 1$. Then there is a point $\sigma' \in (\sigma_0,\sigma_1)$ such that
 \[
 |F(\sigma)|<1 ~~\text{for}~~ \sigma \in \Delta=(\sigma_0,\sigma'),~~|F(\sigma')|=1.
 \]
 Then interval $\Delta$ belongs to a band. From (\ref{dispr}) it follows  that $k(\sigma) $ changes by $\pi /6$ between $\sigma_0$ and $\sigma'$. Hence, $k'(\sigma)=O(1/\epsilon)$ on $\Delta$ or on some part $\Delta '$ of $\Delta$, i.e.,
 \begin{equation}\label{vg}
 V_g=\frac{L}{k'(\sigma)}=O(\epsilon),~~\sigma \in \Delta'.
 \end{equation}

  Note that matrix $T_{\sigma}$ has a pole if $n=0$.
Since matrices $T_{\sigma}$ and $M_{\sigma}$ differ by a rotation, one can easily choose $l_3$ such that  $F(\sigma)=\frac{1}{2}$Tr$M_{\sigma}$ has a pole at a point where $n=0$ (in fact, the pole of $F(\sigma)$ exists for all but specific values of $l_3$). Thus the following equation provides the poles of $F(\sigma)$ under an
appropriate choice of $l_3$ (see (\ref{csp})-(\ref{trl}))
\[
n=\langle \delta, (I-P^2)^{-1}S\delta)\rangle=\frac{1}{2}\langle \delta, [(I-P)^{-1}+(I+P)^{-1}]S\delta)\rangle=0,~~P=C+SB.
\]

Let us introduce the matrix
\begin{equation} \label{ttt}
T=\frac{I-C}{S}=\left(\begin{array}{c}
x~~0\\0~~y \end{array}\right),~~~x=\tan \frac{\sigma l_1}{2},~~y=\tan \frac{\sigma l_2}{2}.
\end{equation}
Then the equation $n=0$ can be rewritten in the form
\begin{equation} \label{fff}
2n=\langle \delta, [(T-B)^{-1}+(T^{-1}+B)^{-1}]\delta)\rangle=0.
\end{equation}
We will conclude this subsection by an re-writing (\ref{fff}) using the components of matrix $B$ and vector $\delta$ (see (\ref{AB})):
\[
2n=\frac{(y-a_{2,2})\delta_1^2+(x-a_{1,1})\delta_2^2+
2a_{1,2}\delta_1\delta_2}{(x-a_{1,1})(y-a_{2,2})-a_{1,2}^2}
\]
\begin{equation} \label{f}
+\frac{(y^{-1}+a_{2,2})\delta_1^2+(x^{-1}+a_{1,1})\delta_2^2-
2a_{1,2}\delta_1\delta_2}{(x^{-1}+a_{1,1})(y^{-1}+a_{2,2})-a_{1,2}^2}=0.
\end{equation}
This is an algebraic equation of the forth order with respect to $(x,y)$. We will discuss it in more detail later.

 \textit{Transparency, preliminary discussion}. Let us recall the estimate (see \cite{smv2}) for the reflection coefficient $r=r_N$ by a finite slab of  periodic media which consists of $N$ periods:
\begin{equation} \label{pole}
|r_N|=|\frac{\sin Nk(\sigma)}{\sin k(\sigma)}|(||M_{\sigma}||^2-2)^{1/2},
\end{equation}
where $||M_{\sigma}||$ is the Gilbert-Shmidt norm of the monodromy matrix, i.e.
\[
||M_{\sigma}||^2=||(m_{i,j})||^2=\sum_{i,j \leq 2} m_{i,j}^2.
\]
We will choose parameters $l_j$ in such a way that the frequency support $\Delta$ of the wave packet is in the middle of a band where $|F(\sigma)|<1$. Then $|\sin k(\sigma)|>\frac{\sqrt{3}}{2}$ there, and
\begin{equation} \label{pole2}
|r_N(\sigma)|=\frac{2}{\sqrt{3}}|(||M_{\sigma}||^2-2)^{1/2},~~\sigma \in \Delta.
\end{equation}

We will choose $l_1,~l_2$ in such a way that $||M_{\sigma_0}||^2=2$ , and therefore
\begin{equation} \label{mo}
r_N(\sigma_0)=0.
\end{equation}
Then $r_N(\sigma)$ is small in a small neighborhood of $\sigma_0$, and we will have both the slowing down of the light and the transparency in $\Delta'$.

Note that $||M_{\sigma}||=||T_{\sigma}||$ (see (\ref{mon})) and that det$T_{\sigma}=1$. Thus $||T_{\sigma}||\geq 2$ and $||T_{\sigma}||= 2$ if matrix $T_{\sigma}$ is
orthogonal (the sum of its off-diagonal elements is zero). Hence, condition
\begin{equation} \label{mo1}
(m+n)(m-n)+1=m^2-n^2+1=0
\end{equation}
provides (\ref{mo}). We are going to write this condition more explicitly.

From (\ref{mn}), (\ref{trl}) it follows that
\[
m+n=c+\langle\delta, (I-P)^{-1}S \delta \rangle =c+\langle\delta, [S^{-1}(I-C-SB)]^{-1} \delta \rangle=
c+\langle\delta, [(T-B)]^{-1} \delta \rangle,
\]
where $T$ is defined in (\ref{ttt}).
Similarly
 \[
 m-n=c-\langle\delta, (I+P)^{-1}S \delta \rangle =c-\langle\delta, [S^{-1}(I+C+SB)]^{-1} \delta \rangle=
c-\langle\delta, [(T^{-1}+B)]^{-1} \delta \rangle.
 \]
 We substitute the last two formulas into (\ref{mo1}) and obtain the following exact form for (\ref{mo}):
\[
[c+\frac{(y-a_{2,2})\delta_1^2+(x-a_{1,1})\delta_2^2+
2a_{1,2}\delta_1\delta_2}{(x-a_{1,1})(y-a_{2,2})-a_{1,2}^2}]\cdot
\]
\begin{equation} \label{rnrn}
\cdot[c-\frac{(y^{-1}+a_{2,2})\delta_1^2+(x^{-1}+a_{1,1})\delta_2^2-
2a_{1,2}\delta_1\delta_2}{(x^{-1}+a_{1,1})(y^{-1}+a_{2,2})-a_{1,2}^2}]+1=0.
\end{equation}
Thus, the set of transparency points $\{r_N=0\}$ is also given by zeroes of a polynomial of forth order in
$(x,y)$-plane, $x=\tan \frac{\sigma l_1}{2},~y=\tan \frac{\sigma l_2}{2}$. To be more accurate, we need to omit points where $n=0$ from this set, since (\ref{mo1}) provides the orthogonality of $T_{\sigma}$ only if $n\neq0$.

 \textit{The choice of parameters.} The center $\sigma_0$ in the frequency interval of the wave packet is given. We need to choose $l_1,~l_2$ in such a way that (\ref{mo})-(\ref{rnrn}) hold at $\sigma_0$ and this point is close to a point where (\ref{f}), (\ref{fff}) hold. Note that condition (\ref{mo}) is equivalent to the orthogonality of matrix $T_{\sigma_0}$. Then one can easily transfer it by rotation to a matrix with zero trace, i.e., one can find $l_3$ such that (\ref{f0}) holds. Hence, it remains to choose $l_1,~l_2$ appropriately.

 We note that equation (\ref{mo1}) for $\sigma_0$ contradicts the condition $n=0$ imposed by  (\ref{f}), (\ref{fff}). Moreover, $n$ can not vanish at a point close to $\sigma_0$ if $n$ is smooth. In order for equation (\ref{mo1}) to be valid at a point $\sigma_0$ and $n$ to be zero at $\sigma_1$ which is close to $\sigma_0$, function $n$ has to be singular near $\sigma_0$. Thus one must choose parameters $(x,y)$ near the point where one of the denominators in (\ref{f}) is zero. Let us choose the denominator of the first fraction. Then the numerator of the first fraction also must be small (otherwise (\ref{f}) is not valid at $\sigma=\sigma_1$). By equating both the numerator and the denominator of the first fraction in (\ref{f}) to zero we find that
\begin{equation*}
x\approx a_{1,1}-a_{1,2}\frac{\delta_1}{\delta_2}, ~~y \approx a_{2,2}-a_{1,2}\frac{\delta_2}{\delta_1}.
\end{equation*}

Finally, we fix small $\epsilon$ and choose
\begin{equation} \label{xx}
x= a_{1,1}-a_{1,2}\frac{\delta_1}{\delta_2}+\epsilon.
\end{equation}
Then we solve (\ref{rnrn}) for $y$ in a neighborhood of the point $a_{2,2}-a_{1,2}\frac{\delta_2}{\delta_1}$. This implies
\begin{equation} \label{yy}
y=a_{2,2}-a_{1,2}\frac{\delta_2}{\delta_1}-(\frac{\delta_2}{\delta_1})^2\epsilon+\gamma\epsilon^2+
\text{O}(\epsilon)^3,
\end{equation}
where the exact value of $\gamma$ can be easily found from (\ref{rnrn}). We put $\sigma=\sigma_0$ in (\ref{ttt}) and determine   $l_1,~l_2$ from (\ref{xx}), (\ref{yy}). This choice of   $l_1,~l_2$ implies
(\ref{mo}).

By solving  (\ref{f}) asymptotically, we find that its solution also has the form (\ref{xx}), (\ref{yy}) with a different value of $\gamma$. This justifies the existence of a pole of the Hill discriminant at the distance O$(\epsilon^2)$ from the point $\sigma_0.$ In fact, let us omit the cubic term in (\ref{yy}). Then equations  (\ref{xx}), (\ref{yy}) with two different values of $\gamma$ define two parabolas $P_1$ and $P_2$ with the same vertex and tangent line at the vertex. If we put  $\sigma=\sigma_0$ in (\ref{ttt}) and fix $\epsilon$, we get a point on $P_1$ at the distance of order $\epsilon$ from the vertex which defines $l_1,l_2$. This choice implies (\ref{mo}). When $\sigma$ changes: $\sigma=\sigma_0+\tau,~0\leq \tau \leq \tau_0$, point $(x,y)$ moves in the direction of the vector $(l_1,l_2)$ or in the opposite direction (see \ref{ttt}). We may change the sign of $\tau$, if needed, to guarantee that the point moves toward the second parabola where $n=0$. The only unacceptable situation is when the vector $(l_1,l_2)$ is tangent to $P_1$. It will not happen in a generic case. Besides one can always avoid it by changing  $l_1$ or $l_2$, since they are defined up to an integer multiple of $2\pi$.

The arguments above prove that wave packet with frequencies in O$(\epsilon^2)$-semi-neighborhood of  $\sigma_0$ will propagate with the group velocity $V_g=$O$(\epsilon^2)$ and the reflection coefficient will have order O$(\epsilon^2)$.

We conclude this subsection by the following remark. Let the geometry of the junctions be chosen. After that, we must choose specific  $l_1,l_2$ satisfying (\ref{xx}), (\ref{yy}). It is easy to choose one of these parameters (for example, $l_1$) by changing the distance between two neighboring junctions, but $l_2$ will be defined after that by the geometry of the network. In fact it is not the geometrical, but only the optical length which plays role here. One can preserve the geometry of the network and change the refraction index in some of the channels to satisfy the relations (\ref{xx}), (\ref{yy}). This produces the same effect as changing the lengths of the corresponding channels.
\bigskip

\textbf{Acknowledgment.} The authors were supported partially by the NSF grant
DMS-0706928.


\begin{thebibliography}{9}


\bibitem{IT}Dell'Antonio, G., Tenuta, L.: \textit{Quantum graphs as
holonomic constraints,} J. Math. Phys., 47 (2006), 072102:1-21.

\bibitem{DE}Duclos P., Exner, P.: \textit{Curvature-induced bound states
in quantum waveguides in two and three dimensions,} Rev. Math. Phys., 7
(1995), 73-102.

\bibitem{DE3}Duclos P., Exner P., Stovicek P.: \textit{Curvature-induced
resonances in a two-dimensional Dirichlet tube,} Ann. Inst. H. Poincare 62
(1995), 81-101.

\bibitem{EP}Exner P., Post O.: \textit{Convergence of spectra of
graph-like thin manifolds,} J. Geom. Phys., 54 (2005), 77-115.

\bibitem{ES}Exner P., \v{S}eba P.: \textit{Electrons in semiconductor
microstructures: a challenge to operator theorists,} in Schr\"{o}dinger
Operators, Standard and Nonstandard (Dubna 1988), World Scientific,
Singapure (1989), 79-100.

\bibitem{ExS2}Exner P. and \v{S}eba P.: \textit{Bound states in curved
quantum waveguides,} J. Math. Phys. 30(1989), 2574 - 2580.

\bibitem{ExS4}Exner P., \v{S}eba P.: \textit{Trapping modes in a curved
electromagnetic waveguide with perfectly conducting walls, }Phys. Lett. A144
(1990), 347-350.

\bibitem{ExV2}Exner P. and Vugalter S. A.: \textit{Asymptotic estimates
for bound states in quantum waveguides coupled laterally through a narrow
window, }Ann. Inst. H. Poincare, Phys. Theor. 65 (1996), 109 - 123.

\bibitem{ExV3}Exner P. and Vugalter S. A.:  \textit{On the number of
particles that a curved quantum waveguide can bind}, J. Math. Phys. 40
(1999), 4630-4638.

\bibitem{ExW}Exner P., Weidl T.: \textit{Lieb-Thirring inequalities on
trapped modes in quantum wires}, Proceedings of the XIII International
Congress on Mathematical Physics (London 2000), International Press of
Boston, 2001, pp.437-443.

\bibitem{FW}Freidlin M., A. Wentzel A.: \textit{Diffusion processes on
graphs and averaging principle}, Ann. Probab., Vol 21, No 4 (1993),
2215-2245.

\bibitem{FW1}Freidlin M.: \textit{Markov Processes and Differential
Equations: Asymptotic Problems,} Lectures in Mathematics, ETH Zurich,
Birkhauser Verlag, Basel, 1996.

\bibitem{KS}Kostrykin V., Schrader R.: \textit{Kirchhoff's rule for
quantum waves}, J. Phys. A: Mathematical and General, Vol 32 (1999), 595-630.

\bibitem{K}Kuchment P.: \textit{Graph models of wave propagation in thin
structures}, Waves in Random Media, Vol.12 (2002), 1-24.

\bibitem{K1}Kuchment P.: \textit{Quantum graphs. I. Some basic structures}%
, Waves in Random Media, Vol.14, No 1 (2004), 107-128.

\bibitem{K2}Kuchment P.: \textit{Quantum graphs. II. Some spectral
properties of quantum and combinatorial graphs,} Journal of Physics A:
Mathematical and General, Vol 38, No 22 (2005), 4887-4900.

\bibitem{KZ1}Kuchment P., Zeng H.: \textit{Convergence of spectra of
mesoscopic systems collapsing onto a graph}, J. Math. Anal. Appl. 258
(2001), 671-700.

\bibitem{KZ2}Kuchment P., Zeng H.: \textit{Asymptotics of spectra of
Neumann Laplacians in thin domains}, in Advances in Differential Equations
and mathematical Physics, Yu. Karpeshina etc (Editors), Contemporary
Mathematics, AMS, 387 (2003), 199-213.

\bibitem{smv}  S. Molchanov, B. Vainberg, \ Slowdown of the wave packet in
finite slabs of periodic media, \textit{Waves in Random Media}, \textbf{14}
(2004), 411-423.

\bibitem{smv1}  S. Molchanov, B. Vainberg, \ Slowing down of wave packets in
quantum graphs, \textit{Waves in Complex and Random Media,} \textbf{15}, No 1 (2005), 101-112.

\bibitem{smv2}  S. Molchanov, B. Vainberg, \ Slowing down and reflection of
waves in trancated periodic media, J. of Funct. Analysis, 231 (2006),
287-311.

\bibitem{MV}  Molchanov S. and Vainberg B.: \textit{Transition from a
network of thin fibers to quantum graph: an explicitly solvable model},
Contemporary Mathematics, 415, AMS (2006), 227-240 (arXiv:math-ph/0605037).

\bibitem{MV1}  Molchanov S. and Vainberg B.: \textit{Scattering solutions in
networks of thin fibers: small diameter asymptotics,} Comm. Math. Phys.,
273, N2, (2007), 533-559 (arXiv:math-ph/0609021).

\bibitem{MV2}  Molchanov S. and Vainberg B.: \textit{Laplace operator in
networks of thin fibers: spectrum near the threshold, }in\textit{\ }%
Stochastic analysis in mathematical physics, 69--93, World Sci. Publ.,
Hackensack, NJ, 2008 (arXiv:0704.2795).

\bibitem{MVspain} Molchanov S. and Vainberg B.: Propagation of Waves in Networks of Thin Fibers, submitted

\bibitem{Pavlov4}Mikhailova, A., Pavlov, B., Popov, I., Rudakova, T.,
Yafyasov, A.: \textit{Scattering on a compact domain with few semi-infinite
wires attached: resonance case. Math. Nachr}. 235 (2002), 101--128.

\bibitem{Pavlov2}Pavlov B., Robert K.: \textit{Resonance optical switch:
calculation of resonance eigenvalues. }Waves in periodic and random media
(South Hadley, MA, 2002), 141--169, Contemp. Math., 339, Amer. Math. Soc.,
Providence, RI, 2003.

\bibitem{P}Post O.: \textit{Branched quantum wave guides with Dirichlet
BC: the decoupling case}, Journal of Physics A: Mathematical and General,
Vol 38, No 22 (2005), 4917-4932.

\bibitem{P1}Post O.: \textit{Spectral convergence of non-compact
quasi-one-dimensional spaces,} Ann. Henri Poincar, 7 (2006), 933-973.

\bibitem{RS}Rubinstein J., Schatzman M.: \textit{Variational problems on
multiply connected thin strips. I. Basic estimates and convergence of the
Laplacian spectrum}, Arch. Ration. Mech. Anal., 160 (2001), No 4, 293-306.

\end{thebibliography}
\end{document}